\newtheorem{proposition}{\bf{Proposition}}
\long\def\comment#1{} 
\newcommand{\xmath}[1] {\ensuremath{#1}\xspace}
\newcommand{\blmath}[1] {\xmath{\bm{#1}}}
\newcommand{\xb}{{\blmath x}}
\newcommand{\yb}{{\blmath y}}
\newcommand{\zb}{{\blmath z}}
\newcommand{\Ac}{\mathcal{A}}
\newcommand{\Tc}{\mathcal{T}}
\newcommand{\Xc}{\mathcal{X}}
\newcommand{\Yc}{\mathcal{Y}}
\newcommand{\Rd}{{\mathbb R}}
\newcommand{\Pc}{{{\mathcal P}}}
\newcommand{\Kd}{\mathbb{K}}
\newcommand{\beq}{\begin{equation}}
\newcommand{\eeq}{\end{equation}}
\newcommand{\beqa}{\begin{eqnarray}}
\newcommand{\eeqa}{\end{eqnarray}}
\definecolor{newcolor}{rgb}{.8,.349,.1}
\journal{Medical Image Analysis}
\begin{document}

\verso{Hyungjin Chung \textit{et~al.}}

\begin{frontmatter}

\title{Two-Stage Deep Learning for Accelerated 3D Time-of-Flight MRA without Matched Training Data}

\author[1]{Hyungjin Chung}
\author[1]{Eunju Cha}
\author[2]{Leonard Sunwoo\corref{cor1}}
\ead{leonard.sunwoo@gmail.com}
\author[1]{Jong Chul Ye\corref{cor1}}
\cortext[cor1]{Corresponding authors.}
\ead{jong.ye@kaist.ac.kr}

\address[1]{Department of Bio and Brain Engineering, Korea Advanced Institute of Science and Technology (KAIST), Daejeon 34141, Republic of Korea}
\address[2]{Department of Radiology, Seoul National University College of Medicine, Seoul National University Bundang Hospital, Seongnam, Republic of Korea}

\received{?}
\finalform{?}
\accepted{?}
\availableonline{?}
\communicated{?}

\begin{abstract}
Time-of-flight magnetic resonance angiography (TOF-MRA) is one of the most widely used non-contrast MR imaging methods to visualize blood vessels, 
but due to the 3-D volume acquisition highly accelerated
acquisition is necessary.
Accordingly,
high quality reconstruction from undersampled TOF-MRA is an important research topic for deep learning.
However, most existing deep learning
works require matched reference data for supervised training, which are often difficult to obtain.
By extending the recent theoretical understanding of cycleGAN from the optimal transport theory,
here we propose a novel two-stage  {unsupervised} deep learning approach, 
  which is composed of the  multi-coil reconstruction network along the coronal plane followed by
a multi-planar refinement network along the axial plane.
Specifically, the first network is trained in the square-root of sum of squares (SSoS) domain to achieve high quality parallel image reconstruction, 
whereas  the second refinement network is designed 
to efficiently learn the characteristics of highly-activated blood flow using double-headed max-pool discriminator. 
Extensive experiments 
demonstrate that the proposed  learning process without matched reference exceeds performance of state-of-the-art compressed sensing (CS)-based method and provides comparable or even better results than supervised learning approaches. 
\end{abstract}

\begin{keyword}
\MSC[2020] 92C55 \sep 68U10\sep 34A55
\KWD \\
Magnetic Resonance Imaging \\
Unsupervised Learning \\
Multiplanar Learning \\
Optimal Transport
\end{keyword}

\end{frontmatter}


\section{Introduction}
\label{sec: intro}
Time-of-flight magnetic resonance angiography (TOF MRA)\citep{keller1989mr, miyazaki2012non, wheaton2012non, laub1995time} is widely used in clinical situations for  visualizing blood flow without the need for the injection of contrast agents.
Here,  the phenomenon of flow-related enhancement of spins entering into an imaging slice is exploited to amplify the contrast between blood vessels and surrounding tissues.
 
In 2-D TOF, multiple thin imaging slices are acquired with a flow-compensated gradient-echo sequence, whereas in 3D TOF a volume of images is obtained simultaneously by phase-encoding in the slice-select direction. These images can be then combined using the maximum intensity projection (MIP) so that one can obtain a 3-D image of the vessels analogous to conventional angiography.  Accordingly,
TOF MRA provides tremendously helpful physiological information for the detection of stenosis or occlusion in the intracranial arteries.

When taking scans of TOF MRA, fully acquiring $k$-space is painfully time consuming, especially for 3-D scans where a large volume has to be covered.
Furthermore, patient motion during the stretched scan time causes artifacts in the image. Consequently, accelerating MR scans would lead to increase the patient throughput and relieve the issue of motion artifacts.

\subsection{CS-MRI and pMRI}

To reduced the long scan time, $k$-space can be sub-sampled, but the $k$-space under-sampling  subsequently introduces aliasing artifacts. To resolve this issue, multiple receiver coils can be utilized to merge information from different receiver coils to compensate
for the missing $k$-space data. These parallel MRI (pMRI)  \citep{SENSE, GRAPPA} techniques are routinely used  in clinical practice. 

For the TOF MRA,  compressed sensing (CS) algorithms \citep{lustig2007sparse,jung2009k} have been  also extensively studied by exploiting the sparsity in the original image domain, which is an inherent nature of angiograms. Moreover, 
applications of CS in conjunction with pMRI have been extensively investigated \citep{Stalder2015CSMRA, Hutter2015CSMRA, Tang2019CSMRA,jin2016general}. Although CS-MRI have shown its effectiveness in the reconstruction of MRI, the inherently iterative nature of the method leads to slow and expensive computation. Moreover, its inability to \textit{learn} from given data distribution is also a drawback.

\subsection{Deep Learning for CS-MRI}

Recently, a myriad of deep learning algorithms have been proposed  for MR reconstruction, which show superior performance over CS-MRI while significantly reducing computation time \citep{wang2016accelerating, schlemper2017deep, zhu2018image, schlemper2018stochastic, eo2018kiki, wang2019dimension, liu2019ifr, wang2020deepcomplexmri, sriram2020end,  lee2018deep, han2017deep, hammernik2018learning}. Generative adversarial networks (GAN)\citep{goodfellow2014generative} have also been largely investigated in the context of MR reconstruction \citep{mardani2017deep, wang2019accelerated, quan2018compressed, yang2017dagan} to further enhance the reconstruction quality. 

Nonetheless, most of the deep learning approaches are supervised learning framework where a large amount of matched fully sampled scans must be provided to train the neural network properly. This imposes fundamental challenges in neural network trainings, since the matched fully sampled
 reference data should be
acquired under the same conditions, which is not always possible in clinical environment.

\subsection{Our contributions}

In our recent paper \citep{sim2019OT}, 
we proposed a systematic  framework to design various types of unsupervised learning architecture for general inverse problems using
the optimal transport theory  \citep{villani2008optimal,peyre2019computational},
and also provided preliminary results for single coil 2D MR reconstruction from sparse Fourier samples \citep{sim2019OT}.
The resulting network architecture is similar to cycleGAN \citep{zhu2017unpaired},
but the knowledge of the imaging physics can significantly simplify the network architecture and training scheme \citep{sim2019OT}.

By extending this idea, 
here we suggest a novel unpaired  multiplanar deep learning scheme  which aims specifically at the reconstruction of under-sampled 3D TOF MRA scan.
To overcome the large GPU memory and training data requirement for 3-D learning,
we propose a novel  architecture that  consists of two successive unsupervised training steps in 2D space. The first step is the reconstruction of MRA scan in the coronal plane, which is done slice by slice, incorporating complex multi-coil data into the training scheme. In the second step of reconstruction, we aim to further enhance the quality of reconstruction, especially in terms of maximum intensity projection (MIP) images, through the use of stacked 3D reconstruction with the newly introduced \textit{projection discriminator}.
One of the important advantages of the proposed two-stage unsupervised learning scheme is that
each neural network can be trained with different sets of unpaired training data set, which maximizes the utility of
available data for training purpose.

In brief, our contributions can be summarized as follows:

\begin{itemize}
	\item Two-stage unsupervised learning process for 3D reconstruction, in the coronal plane and the axial plane respectively, is proposed. This sequential learning process is particularly useful in 3D MR acceleration where you have 4 dimensions (3 spatial, 1 for coil). 
	\item {Projection discriminator}, which learns the distribution of both volumetric and max-pooled images, is proposed. The discriminator is used in the second stage of reconstruction, and proves to enhance the quality of images greatly, especially in terms of MIP images.
	\item By deriving network architectures using the optimal transport theory, unwanted artificial features, which are often observed in GAN type algorithms, can be prevented in a top-down manner.
\end{itemize}

The remainder of the paper is organized as follows: in Section~\ref{sec:review}, we briefly review the geometry of cycleGAN from optimal transport
theory perspective; in Section~\ref{sec:theory}, the theory of our two-step unsupervised learning framework for 3D TOF MRA is proposed by 
extending the theory of OT driven cycleGAN. In Section~\ref{sec:method}, exhaustive description of methods and materials is provided. In Section~\ref{sec:results}, experimental results in both in-vitro and in-vivo situations are shown. In Section~\ref{sec:discussion}, we discuss different choices for the design of our learning process,
which is followed by conclusions in Section~\ref{sec:conclusion}.

\section{Related Works}
\label{sec:review}

In this section, to make the paper self-contained, we will briefly review the optimal transport driven cycleGAN proposed in our companion paper \citep{sim2019OT}.

\subsection{Geometry of CycleGAN}

%
%
%
 
 Consider the following measurement model:
\begin{eqnarray}
\yb&=&F \xb \ , 
\end{eqnarray}
where $\yb \in \Yc$ and $\xb \in \Xc$ denote the measurement and the unknown image, respectively, and $F : \Xc \mapsto \Yc$ is the imaging operator,
which could be known, partially known, or completely unknown. 

In contrast to the supervised learning where the goal  is to learn the relationship between  the  image $\xb$ and measurement $\yb$ pairs,
in the unsupervised learning framework there are no matched image-measurement pairs. 
Still we could have sets of images and unpaired measurements, so the goal of unsupervised learning is to match the probability distributions rather than each individual samples
as shown in Fig.~\ref{fig:cycleGANgeom}. 
This can be done by finding  transportation maps that transport the probability measures between the two spaces.

\begin{figure}[!hbt] 	
\center{ 
\includegraphics[width=8cm]{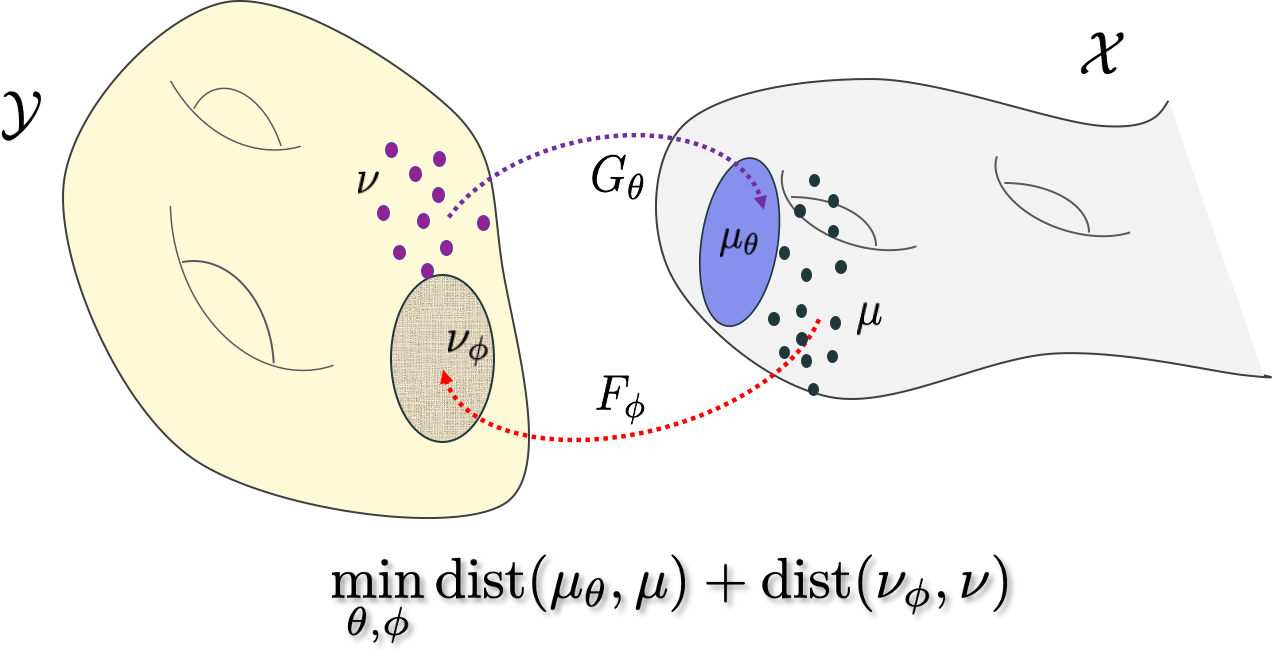}
}
\caption{Geometric view of unsupervised learning.}
\label{fig:cycleGANgeom}
\end{figure}

Specifically, suppose that the target image space
$\Xc$ is equipped with a probability measure $\mu$, whereas
the measurement space  $\Yc$ is  with a probability measure $\nu$ as shown in Fig.~\ref{fig:cycleGANgeom}.
Then, 
we can see that the  mass transport from $(\Xc,\mu)$ to $(\Yc,\nu)$ is performed by the forward operator $F$,
 so that  $F$ ``pushes forward'' the measure $\mu$ in $\Xc$ to $\nu_F$ in the space $\Yc$ \citep{villani2008optimal,peyre2019computational}.
On the other hand, the mass transportation from the measure space $(\Yc,\nu)$ to another measure space $(\Xc,\mu)$ is done by a generator $G: \Yc \mapsto \Xc$, i.e.
the generator $G$ pushes forward the measure $\nu$ in $\Yc$ to a measure $\mu_G$ in the target space $\Xc$. 
Then, the optimal transport map for unsupervised learning can be achieved by minimizing the statistical distances  $\mathrm{dist}(\mu,\mu_G)$ between $\mu$ and $\mu_G$, and  $\mathrm{dist}(\nu,\nu_F)$ 
 between $\nu$ and $\nu_F$, and our proposal is to use the Wasserstein-1 metric as a means to measure the statistical distance.

More specifically, for the choice of a metric $d(\xb,\xb')=\|\xb-\xb'\|$ in $\Xc$,  
the Wasserstein-1 metric between $\mu$ and $\mu_G$ can
be computed by \citep{villani2008optimal,peyre2019computational}
\begin{align}\label{eq:Wmu}
W_1(\mu,\mu_G)
=&\inf\limits_{\pi \in \Pi(\mu,\nu)}\int_{\Xc\times \Yc} \|\xb-G(\yb)\|d\pi(\xb,\yb) 
\end{align}
where $\Pi(\mu,\nu)$ is the set of joint measures whose marginal distributions in $\Xc$ and $\Yc$ are $\mu$ and $\nu$, respectively.
Similarly, the Wasserstein-1 distance between $\nu$ and $\nu_F$ is given by
\begin{align}\label{eq:Wnu}
W_1(\nu,\nu_F)
=&\inf\limits_{\pi \in \Pi(\mu,\nu)}\int_{\Xc\times \Yc} \|F(\xb)-\yb\|d\pi(\xb,\yb) 
\end{align}
Since our goal is to find the transportation maps represented by the joint distribution $\pi$, 
separate minimization of  \eqref{eq:Wmu} and \eqref{eq:Wnu}  is not desirable; instead, 
we should  minimize them together with the same joint distribution $\pi$:
\begin{align}\label{eq:unsupervised}
\inf\limits_{\pi \in \Pi(\mu,\nu)}\int_{\Xc\times \Yc}c(\xb,\yb;G,F) d\pi(\xb,\yb) 
\end{align}
where the transportation cost is defined by
\begin{align}\label{eq:ourc}
c(\xb,\yb;G,F)= \|\xb-G(\yb)\|+ \|F(\xb)-\yb\|
\end{align}

One of the most important contributions of our companion paper \citep{sim2019OT} is to show that
the primal formulation of the unsupervised learning in \eqref{eq:unsupervised}  with the transport cost \eqref{eq:ourc}
can be represented by a dual formulation:
\begin{eqnarray}\label{eq:OTcycleGAN}
\min_{G,F}\max_{\psi,\varphi}\ell_{cycleGAN}(G,F;\psi,\varphi)
\end{eqnarray}
where 
\begin{eqnarray}
\ell_{cycleGAN}(G,F;\psi,\varphi):=  \lambda \ell_{cycle}(G,F) +\ell_{Disc}(G,F;\psi,\varphi) 
\end{eqnarray}
where $\lambda>0$ is the hyper-parameter, and  the cycle-consistency term is given by
\begin{align}\label{eq:cycleloss} 
\ell_{cycle}(G,F)  =& \int_{\Xc} \|\xb- G(F(\xb)) \|  d\mu(\xb) \\
&+\int_{\Yc} \|\yb-F(G(\yb))\|   d\nu(\yb) \notag 
\end{align}
whereas  the second term is the discriminator term:
\begin{align}
&\ell_{Disc}(G,F;\psi,\varphi)  \label{eq:Disc} \\
=&\max_{\varphi}\int_\Xc \varphi(\xb)  d\mu(\xb) - \int_\Yc \varphi(G(\yb))d\nu(\yb) \notag \\
 & + \max_{\psi}\int_{\Yc} \psi(\yb)  d\nu(\yb) - \int_\Xc \psi(F(\xb))  d\mu(\xb) \notag
\end{align}
Here, $\varphi,\psi$ are often called Kantorovich potentials and satisfy 1-Lipschitz condition (i.e.
\begin{align*}
|\varphi(\xb)-\varphi(\xb')|\leq \|\xb-\xb'\|,&~\forall \xb,\xb'\in \Xc \\
|\psi(\yb)-\psi(\yb')|\leq \|\yb-\yb'\|,&~\forall \yb,\yb'\in \Yc
\end{align*}
We further showed that if the forward operator $F$ is known, the optimization
with respect to $F$ in \eqref{eq:OTcycleGAN} is no more necessary, which leads to the
simplified discriminator term:
\begin{align}
\ell_{Disc}(G,F;\varphi)  
=\max_{\varphi}\int_\Xc \varphi(\xb)  d\mu(\xb) - \int_\Yc \varphi(G(\yb))d\nu(\yb) \label{eq:discsimple}  
\end{align}
We will show that these two forms of optimal transport driven cycleGAN (OT-cycleGAN) is useful for the proposed
two-stage reconstruction method.

\section{Theory}
\label{sec:theory}

\subsection{Forward Model}
 
%
%

One of the most widely used 3D TOF techniques is the so called
MOTSA, which stands for Multiple Overlapping Thin Slab Acquisition \citep{blatter1991cerebral}.  
MOTSA involves the sequential acquisition of a several overlapping 3D volumes (or ``slabs"). Each slab contains
relatively small number of slices, so loss of signal due to saturation effects is relatively limited.
However,  some variation in signal still occurs at the end slices due to the saturation effect, so MOTSA extracts only the central portions for each of the overlapping acquisitions to make up the final data set for processing into the MRA projections. 
The end slices are typically discarded or averaged with those in the adjacent MOTSA section. 

\begin{figure}[!hbt]
\hspace{-1.0cm}
\center{ 
\includegraphics[width=8.0cm]{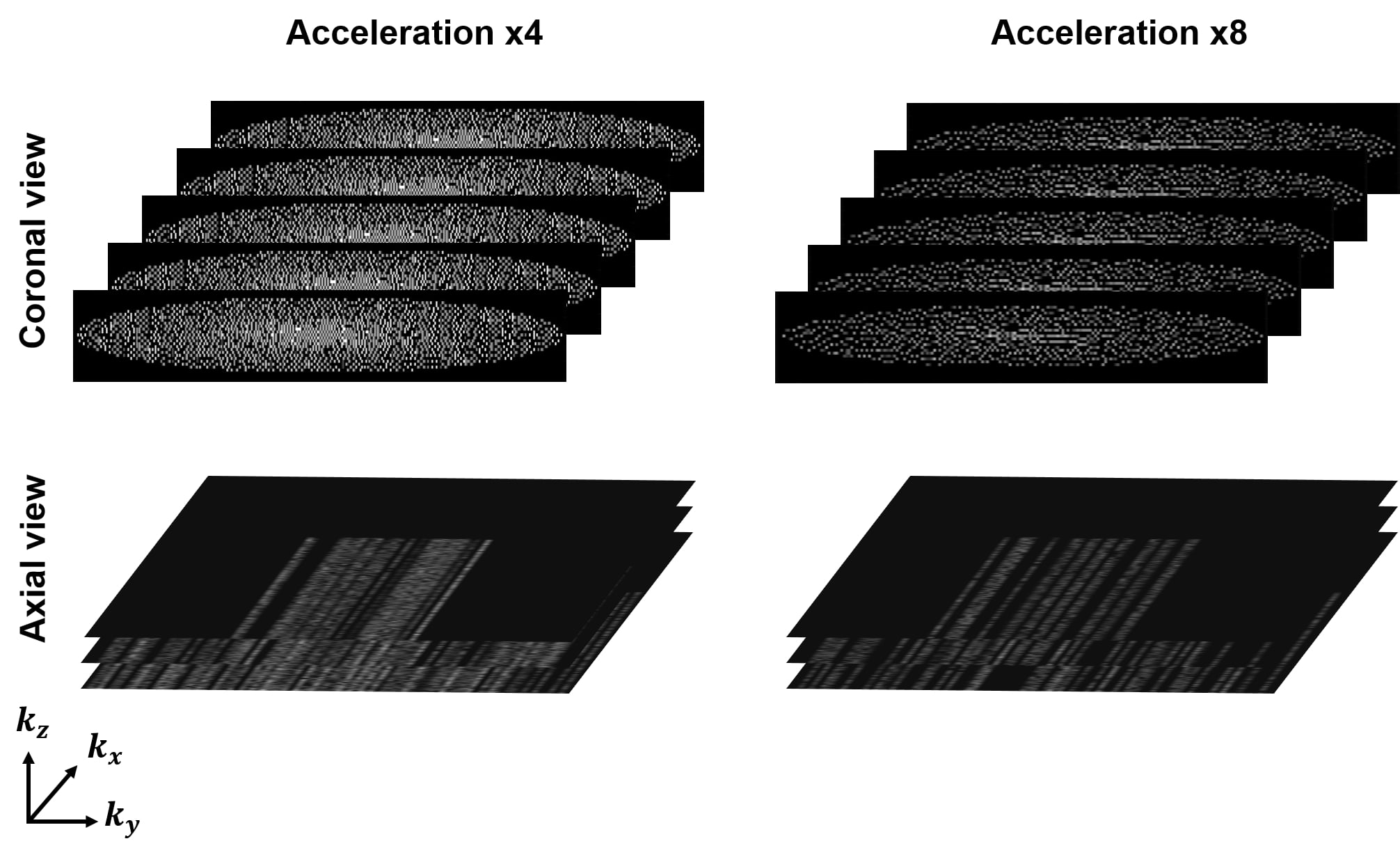}
}
\caption{Sampling masks used for both prospective and retrospective under-sampling, each responsible for $\times$4 and $\times$8 acceleration. The first row visualizes masks in the coronal plane. The second row shows masks in the axial plane, where partial Fourier sampling scheme \citep{feinberg1986halving} was applied.}
\label{fig:sampling_mask}
\end{figure}

In accelerated MOTSA acquisition,
3D scans, when seen from the coronal plane, have the same sampling mask
 specifically given in Fig.~\ref{fig:sampling_mask}. 
Performing Fourier transform along the read-out direction leads to the following forward problem:
 \begin{eqnarray}\label{eq:fwd}
\widehat \xb & =\Pc_\Omega\Tc \xb
\end{eqnarray}
where with a slight abuse of notation we define
\begin{eqnarray}\label{eq:fwdspec}
\begin{split}
\xb: = \begin{bmatrix}
\xb^{(1)} & \cdots & \xb^{(C)}
\end{bmatrix}, \quad
\widehat 
\xb:=\begin{bmatrix} \widehat\xb^{(1)} & \cdots & \widehat\xb^{(C)}\end{bmatrix}
\end{split}
\end{eqnarray}
in which  $C$ is the number of coils,
$\Tc$ denotes 2D spatial Fourier transform,  and $\Pc_\Omega$ is the projection operator on the sampling
mask $\Omega$ such as Fig.~\ref{fig:sampling_mask}.

\begin{figure*} 	
\center{ 
\includegraphics[width=18.0cm]{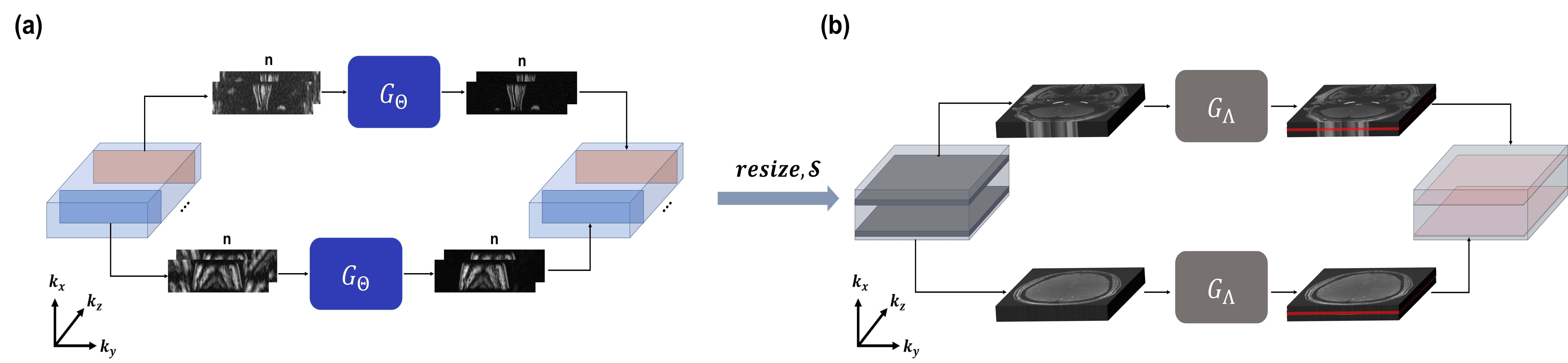}
}
\caption{Overall pipeline of multi-planar learning scheme. (a) Step I: Coronal reconstruction - images are reconstructed slice-by-slice with $G_\Theta$ which are then stacked to form a full volume. Resizing from matrix size 774$\times$359$\times$21 to 512$\times$512$\times$45 is done, and the coil dimension is merged from SSOS ($\Ac$) operation. (b) Step II: Axial reconstruction - volume data of slice depth 7 are fed to $G_\Lambda$, while only the center slices from reconstruction output are used to refine each slice of the volume.}
\label{fig:flowchart}
\end{figure*}

\begin{figure*}[ht!]
\center{ 
\includegraphics[width=18.0cm]{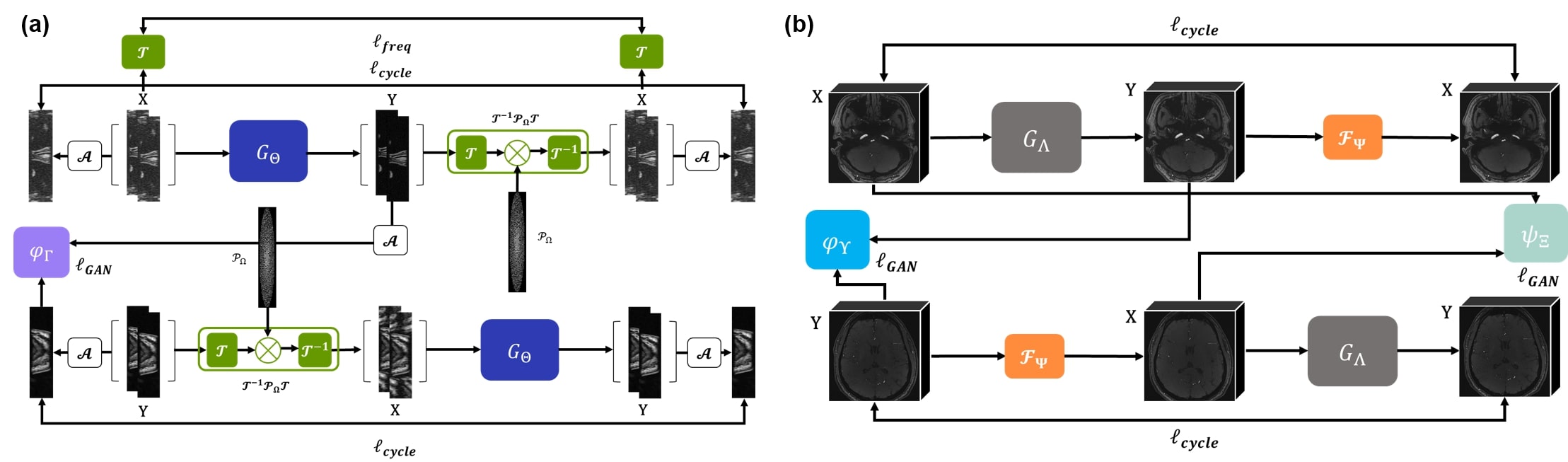}
}
\caption{Detailed pipeline of each training scheme. (a) Step I: coronal reconstruction. Complex valued multi-coil data are trained with MR-physics driven cycleGAN. All the losses in the image domain, i.e. $\ell_{cycle}, \ell_{GAN}$ are calculated with respect to SSOS images. (b) Step II: axial reconstruction. multi-coil information is merged prior to the second step of reconstruction process. Training in axial direction is done partially in 3D, which consists of 7 slices of stacked images.}
\label{fig:detailed_flowchart}
\end{figure*}

\subsection{Two Step Unsupervised 3D TOF Reconstruction}

For a given forward model in \eqref{eq:fwd}, which is obtained from sampling scheme along the coronal plane as in 
Fig.~\ref{fig:sampling_mask},
the reconstruction should be also performed in the coronal direction.
Unfortunately, this poses a problem since the radiologists would typically review images in the axial plane and the reconstruction plane is not aligned with the viewing plane by radiologists; thus,
remaining reconstruction artifacts from the coronal direction may reduce diagnostic performance.
One could address this using 3D learning, but the memory requirement for 3D neural network training
is much larger than the standard GPU memory, which prohibits its use.

Therefore, the main idea of the proposed method is a two step approach,
where  the first step reconstruction is performed along the coronal direction, which is followed by the second
step for the
axial directional refinement, as shown in Fig.~\ref{fig:flowchart}.
In particular, our emphasis is to perform these two step reconstruction without matched reference data,
where the following Proposition is useful in network design.
%

\begin{proposition}\label{prp:main}
Suppose that the transportation cost for the primal OT problem in \eqref{eq:unsupervised}
is given by
\begin{align}
c(\xb,\yb;G,F)=&   \|\Ac(\xb)-\Ac(G(\yb))\|+ \|\Ac(F(\xb))-\Ac(\yb)\|  \notag  \\
&+ b_x(\xb;G,F)+ b_y(\yb;G,F) \label{eq:cmain}
\end{align}
where $\Ac$ is a deterministic (non)linear operator, and $b$ and $c$ are deterministic (non)linear functionals, i.e. $b_x:\Xc\mapsto \Rd$ and $b_y:\Yc\mapsto \Rd$. Then, the corresponding
dual OT problem is given by
\begin{eqnarray}\label{eq:dualOT}
\min_{G,F}\max_{\psi,\varphi}\ell_{dualOT}(G,F;\psi,\varphi)
\end{eqnarray}
where 
\begin{align}
&\ell_{dualOT}(G,F;\psi,\varphi):=  \\
& \lambda \ell_{cycle}(G,F) +\ell_{Disc}(G,F;\psi,\varphi)  + \ell_x(G,F) + \ell_y(G,F) \notag
\end{align}
where $\lambda>0$ is the hyper-parameter, and  the cycle-consistency term is given by
\begin{align*}
\ell_{cycle}(G,F)  =& \int_{\Xc} \|\Ac(\xb)- \Ac(G(F(\xb))) \|  d\mu(\xb) \\
&+\int_{\Yc} \|\Ac(\yb)-\Ac(F(G(\yb)))\|   d\nu(\yb)
\end{align*}
whereas  the second term is the discriminator term:
\begin{align}
&\ell_{Disc}(G,F;\psi,\varphi)  \label{eq:disc} \\
=&\max_{\varphi}\int_\Xc \varphi(\Ac(\xb))  d\mu(\xb) - \int_\Yc \varphi(\Ac(G(\yb)))d\nu(\yb) \notag \\
 & + \max_{\psi}\int_{\Yc} \psi(\Ac(\yb))  d\nu(\yb) - \int_\Xc \psi(\Ac(F(\xb)))  d\mu(\xb) \notag
\end{align}
with 1-Lipschitz function $\varphi,\psi$, and the last two terms are given by
\begin{align*}
 \ell_x(G,F) :=& \int b_x(\xb;G,F) d\mu(\xb)\\
 \ell_y(G,F) :=& \int b_y(\yb;G,F) d\nu(\yb) 
\end{align*}
\end{proposition}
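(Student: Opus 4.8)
\noindent\emph{Proof sketch.} The plan is to reduce the claim to the duality identity \eqref{eq:OTcycleGAN} of \citep{sim2019OT}, using two observations: the single-variable terms $b_x,b_y$ are invisible to the transport infimum, and the remaining cost is precisely the cost \eqref{eq:ourc} read off through the map $\Ac$.

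Fix $G$ and $F$ and look at the primal value $\inf_{\pi\in\Pi(\mu,\nu)}\int_{\Xc\times\Yc} c(\xb,\yb;G,F)\,d\pi$. Since $b_x$ depends only on $\xb$, $b_y$ only on $\yb$, and every $\pi\in\Pi(\mu,\nu)$ has marginals $\mu$ and $\nu$, we have $\int b_x(\xb;G,F)\,d\pi=\int_\Xc b_x(\xb;G,F)\,d\mu(\xb)=\ell_x(G,F)$ and likewise $\int b_y(\yb;G,F)\,d\pi=\ell_y(G,F)$, \emph{independently of $\pi$}. These two terms therefore leave the infimum as additive constants, and it remains to identify the dual of $\inf_\pi\int\big(\|\Ac(\xb)-\Ac(G(\yb))\|+\|\Ac(F(\xb))-\Ac(\yb)\|\big)\,d\pi$.

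Equip $\Xc$ and $\Yc$ with the pseudometrics $d_\Xc(\xb,\xb'):=\|\Ac(\xb)-\Ac(\xb')\|$ and $d_\Yc(\yb,\yb'):=\|\Ac(\yb)-\Ac(\yb')\|$. In these pseudometrics the remaining cost equals $d_\Xc(\xb,G(\yb))+d_\Yc(F(\xb),\yb)$ --- exactly the transport cost \eqref{eq:ourc} --- and the induced cycle residuals are $d_\Xc(\xb,G(F(\xb)))=\|\Ac(\xb)-\Ac(G(F(\xb)))\|$ and $d_\Yc(\yb,F(G(\yb)))=\|\Ac(\yb)-\Ac(F(G(\yb)))\|$. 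The derivation of \eqref{eq:OTcycleGAN} in \citep{sim2019OT} uses only the metric structure of the two spaces: a lower bound obtained by pushing an arbitrary coupling $\pi$ through the relevant maps and invoking Kantorovich--Rubinstein duality for $W_1$, together with a matching upper bound obtained by evaluating the cost along couplings built explicitly from $F$ and $G$. Hence the same argument applies verbatim with $(d_\Xc,d_\Yc)$ in place of the ambient norms (passing, if desired, to the quotient metric spaces, and under the usual mild regularity that makes strong duality hold), producing $\lambda\,\ell_{cycle}(G,F)+\ell_{Disc}(G,F;\psi,\varphi)$ now with the $d_\Xc,d_\Yc$ distances. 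Restoring the constants $\ell_x(G,F)+\ell_y(G,F)$ and then minimizing over $G,F$ yields \eqref{eq:dualOT} and \eqref{eq:disc}.

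The step that needs genuine care --- and where I expect the main obstacle --- is the translation of the Kantorovich potentials. In the $d_\Xc$-metric a dual variable is a function on $\Xc$ that is $1$-Lipschitz with respect to $d_\Xc$; I must show these are exactly the compositions $\varphi\circ\Ac$ with $\varphi$ $1$-Lipschitz in the usual norm (on the codomain of $\Ac$), so that the discriminator term is $\int_\Xc\varphi(\Ac(\xb))\,d\mu(\xb)-\int_\Yc\varphi(\Ac(G(\yb)))\,d\nu(\yb)$ as stated, and similarly for $\psi$. One inclusion is immediate: if $\varphi$ is $1$-Lipschitz then $|\varphi(\Ac(\xb))-\varphi(\Ac(\xb'))|\le\|\Ac(\xb)-\Ac(\xb')\|=d_\Xc(\xb,\xb')$. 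For the converse, a $d_\Xc$-$1$-Lipschitz function is constant on the fibres of $\Ac$ (because $\Ac(\xb)=\Ac(\xb')$ forces $d_\Xc(\xb,\xb')=0$), hence factors as $\varphi\circ\Ac$ with $\varphi$ $1$-Lipschitz on $\Ac(\Xc)$, and a McShane--Whitney extension promotes $\varphi$ to a $1$-Lipschitz function on the whole codomain without changing any of the integrals. Granting this identification --- together with the routine check that the explicit couplings used for the upper bound retain marginals $\mu$ and $\nu$ in the present setting --- the proposition follows. \QED
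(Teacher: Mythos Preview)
Your argument is correct and follows the same route as the paper's appendix: split off $b_x,b_y$ using the marginal constraints (so they become the $\pi$-independent terms $\ell_x,\ell_y$), then apply the duality machinery of \citep{sim2019OT} to the residual cost $\|\Ac(\xb)-\Ac(G(\yb))\|+\|\Ac(F(\xb))-\Ac(\yb)\|$ and identify the Kantorovich potentials as compositions $\varphi\circ\Ac$, $\psi\circ\Ac$. Your pseudometric framing and the explicit McShane--Whitney step give a slightly cleaner justification for the form of the discriminators than the paper's direct $c$-transform bound, but the substance is the same.
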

\begin{proof}
See Appendix.
\end{proof}

\subsubsection{Step I: Coronal Reconstruction}


Using Proposition~\ref{prp:main} we are now ready to derive our algorithm.
First,  to make the dimension of $\Xc$ and $\Yc$ the same,
the forward model in \eqref{eq:fwd} is first converted
to an image domain forward formulation by taking inverse Fourier transform:
\begin{eqnarray}\label{eq:fwd_1}
\yb & = \Tc^{-1}\Pc_\Omega\Tc \xb = F\xb,\quad \mbox{with}\quad F:= \Tc^{-1}\Pc_\Omega\Tc 
\end{eqnarray}
where  $\yb = \begin{bmatrix} \yb^{(1)} & \cdots & \yb^{(C)}\end{bmatrix}$ 
and
$\Tc^{-1}$ is the inverse Fourier transform.
Then,  define the following transportation cost:
\begin{align} 
    c(\xb, \yb;G, F) &= \|\Ac(\yb)- \Ac (F(\xb))\| \label{eq:fidelity}\\
    &+ \|\Ac(\xb) - \Ac(G(\yb)) \| \label{eq:DLprior}\\
    &+ \alpha\| \Ac(\xb) - \Ac(G(\xb))\| \label{eq:ident}\\
    &+ \beta\| \Pc_\Omega \Tc \xb - \Pc_\Omega \Tc G(F(\xb))\|_F ^2\label{eq:kfidelity}
\end{align}
where  $\alpha$ and $\beta$ are appropriate hyperparameters,
and $\Ac$ is now defined as  the square-root of sum of squares (SSOS) operation $\zb = \Ac(X)$ for multi-coil data, 
where the $n$-th component of the vector $\zb$ is formally defined as:
\begin{align}\label{eq:ssos}
z_n =  \left(\sum_{i=1}^C |x^{(i)}_n|^2 \right)^{\frac{1}{2}}
\end{align}

The transportation cost $c(\xb, \yb;G, F)$ deserves further discussion.
Specifically, the first two terms \eqref{eq:fidelity} and \eqref{eq:DLprior} are directly related to those in OT-cycleGAN,
but the loss is calculated after taking the SSoS to make the image comparison less dependent on the coil
sensitivity map.
On the other hand, the identity loss
 \eqref{eq:ident} enforces regularization to the neural network such that it does not alter images that are already in the $\Yc$ domain,
 and  \eqref{eq:kfidelity} refers to data fidelity term in the  k-space domain.
To apply data consistency to k-space data that are inherently acquired in complex domain for each coil, 
we calculate the k-space loss using Frobenius norm.

By inspection, we can see that our transportation cost is identical to \eqref{eq:cmain}
if we set
\begin{align}
&b_x(\xb;G,F) \notag\\
&:=  \alpha\| \Ac(\xb) - \Ac(G(\xb))\| + \beta\| \Pc_\Omega \Tc \xb - \Pc_\Omega \Tc G(F(\xb))\|_F ^2 \\
&b_y(\xb;G,F) =0
\end{align}
so that we can use the dual formulation in Proposition~\ref{prp:main}.
Moreover, since  the k-space sampling mask $\Omega$ is known a priori,   the competition between $F$ and $\psi$ is not necessary and we only need
to estimate $G$ and the corresponding discriminator $\varphi$.
By modeling  them with neural networks with parameters $\Theta$ and $\Gamma$,  respectively,
we can obtain the following loss function:
\begin{eqnarray}
 \min_\Theta \max_\Gamma \ell(\Theta, \Gamma)
\end{eqnarray}
with
\begin{align}\label{eq:phaseIloss}
\ell(\Theta, \Gamma) &= \gamma \ell_{cycle}(\Theta) + \ell_{Disc}(\Theta, \Gamma)\\
&+ \alpha \ell_{identity}(\Theta) + \beta \ell_{freq}(\Theta).
\end{align}
where $\gamma,\alpha$ and $\beta$ denote some hyper-parameters, and
\begin{align}
\label{eq:simple_cycle_loss}
\ell_{cycle}(\Theta)  &= \int_{\Yc} \|\Ac(\yb) - \Ac(F (G_\Theta(\yb)))\| d\nu(\yb) \notag\\
& +    \int_{\Xc} \|\Ac(\xb) - \Ac (G_\Theta(F(\xb)))\| d\mu(\xb)  ,
\end{align}
and
\begin{align}
\label{eq:simple_wgan_loss}
\ell_{Disc}(\Theta,\Gamma)  &= \int_{\Xc} \varphi_{\Gamma}(\Ac(\xb))d\mu(X)- \int_{\Yc} \varphi_{\Gamma}(\Ac(G_\Theta(\yb))) d\nu(\yb)  \\
\label{eq:identity}
\ell_{identity}(\Theta)  &= \int_{\Xc} \|\Ac(\xb)- \Ac(G_\Theta(\xb)) \| d\mu(\xb)\\  
\label{eq:freq}
\ell_{freq}(\Theta)  &= \int_{\Yc} \|\Pc_\Omega\Tc \xb - \Pc_\Omega \Tc G_\Theta(F(\xb)) \|_F^2 d\mu(\xb)
\end{align}

\subsubsection{Step II: Axial Reconstruction}

After the reconstruction through Step I, outputs are stacked together to form a single slab. As will be shown later in experiments,
when we see the images in the axial plane, however, images tend to be blurry, and lacks proper texture. Accordingly, when MIP is performed, thin vessel structures are omitted, or disconnected, which may lead to misdiagnoses such as vascular stenosis. 

Consequently, we devise a method for axial image enhancement which utilizes another unsupervised neural network to improve the quality  especially in MIP images.
 More specifically, as shown in Fig.~\ref{fig:flowchart}(b), after the reconstruction in the coronal plane,
 we construct a 3D volume for each slab, which is used as input for axial image refinement network.
   The rationale for taking stacked volume as input are as follows: first, with the use of volume data, we can perform MIP to the  volume, so that the networks can learn the distribution of the partially projected image. Second, being able to infer from adjacent slices, the network can take advantage of information from bordering slices. The advantages will be discussed more thoroughly in the discussion section.

One thing to note here is that the relationship between the input and output domains in Step II is not well-defined. More specifically, with a slight
abuse of notation, let $\Yc$ be the distribution of 3-D volume of SSoS images that were reconstructed through Step I, and $\Xc$ be the desired 3-D volume of SSoS image distribution. Unlike Step I, where we could replace one of the generators with a known forward operator, there exists no closed form
mapping $F: \Xc \mapsto \Yc$ in this case due to the SSoS operation and volume stacking.
This situation corresponds to the OT-cycleGAN formulation  where both forward and 
inverse operators are unknown.
More specifically, by defining the forward operator $F$ in terms of  a  neural network parameterized by  $\Psi$, we define
the following transportation cost
\begin{eqnarray}
c(\xb,\yb;G,F) = \|\yb - F(\xb)\| + \|G(\yb) - \xb\|,
\end{eqnarray}
Then, the corresponding OT-cycleGAN formulation is given as a Kantorovich dual formulation in \eqref{eq:OTcycleGAN}
where $\ell_{cycle}$, and $\ell_{Disc}$ are cyclic consistency loss and Wasserstein GAN  loss, respectively, which are represented by \eqref{eq:cycleloss} and \eqref{eq:Disc}, respectively.
The resulting network architecture is shown in Fig.~\ref{fig:detailed_flowchart}(b).

\begin{figure*}[!hbt]
\center{ 
\includegraphics[width=18.0cm]{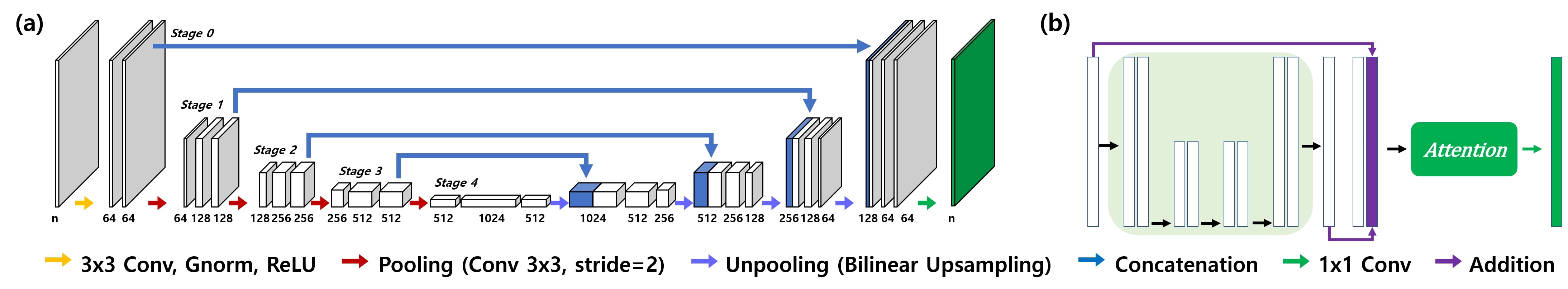}
}
\caption{Network architecture of the generators that were used in Step I and Step II. (a) Baseline U-Net which was modified from the original U-Net \citep{ronneberger2015u}. (b) Network architecture for $G_\Theta$ and $G_\Lambda$ which consist of baseline U-Net with adaptive residual attention module.}
\label{fig:generator_architecture}
\end{figure*}

Now, one of the main novelties in the second step comes from the design of the
 discriminator $\varphi$ in  \eqref{eq:Disc}.
 More specifically, to be an OT-cycleGAN,
 the discriminator $\varphi$ should satisfy the 1-Lipschitz condition, i.e.
 \begin{align}
|\varphi(\xb)-\varphi(\xb') | \leq \|\xb-\xb'\|,\quad \forall \xb,\xb'\in \Xc
 \end{align}
In this paper,   our discriminator architecture  is obtained from PatchGAN  as shown in Fig.~\ref{fig:discriminator_architecture}(b) \citep{zhu2017unpaired}.
However, care should be taken since $\Xc$ is composed of the 3D slabs.
Accordingly, slice direction is stacked in the channel dimension, so that 2-D convolution in PatchGAN can be utilized directly. In the first path, as shown in Fig.~\ref{fig:discriminator_architecture}(a), volume data is directly used as input to PatchGAN.
In the second path, max pooling is applied along the slice directions to generate the 2-D image, which is then used as an input
for PatchGAN (see Fig.~\ref{fig:discriminator_architecture}(a)).
This is in fact equivalent to applying the PatchGAN to the MIP image at each slab,
which is   necessary for learning the distribution of  MIP. The quality of MIP images are important in that MIP images are primarily used for radiologists in search of vascular pathology. Although equally important, source images usually serve as a supplementary tool.

Mathematically, the resulting discriminator $\varphi$ can be represented
as
\begin{align}
{\varphi(\xb)=\lambda_1\varphi_{1}(\xb)+\lambda_2\varphi_{2}^{\max}(\xb)}
\end{align} 
{where $\varphi_{1}$ and $\varphi_{2}^{\max}$
are discriminators for the original volume and max-pooled images, respectively, and $\lambda_1$, and $\lambda_2$ are appropriate hyperparameters.}
Then, the resulting discriminator loss function in \eqref{eq:Disc} can be decomposed as follows: 
\begin{eqnarray}
\begin{split}
&\ell_{Disc}(G, F;\varphi,\psi) \\
&= {\lambda_1\left(\int_\Xc \varphi_{1}(\xb)d\mu(\xb) - \int_\Yc \varphi_{1}(G_\Theta(\yb))d\nu(\yb)\right)} \\
&+ \lambda_2\left(\int_\Xc \varphi_{2}^{\max}(\xb)d\mu(\xb) - \int_\Yc \varphi_{2}^{\max}(G_\Theta(\yb))d\nu(\yb)\right) \\
&+ \left(\int_\Yc \psi(\yb)d\nu(\yb) - \int_\Xc \psi(F \xb)d\mu(\xb)\right)
\end{split}
\label{eq:WGAN loss_1}
\end{eqnarray}
Here, the generators $G$ and $F$ are implemented using
neural network parameterized by $\Lambda$ and $\Psi$, respectively,
whereas the discriminators $\varphi = \lambda_1\varphi_1+\lambda_2\varphi_2$ and $\psi$ are realized
using neural network with the weights $\Upsilon =[\Upsilon_1,\Upsilon_2]$ and $\Xi$, respectively.

 By jointly optimizing the set of  discriminators responsible for learning the distribution of the stacked volume, and the MIP discriminator which learns the distribution of MIP images, our method greatly improves the quality of MIP images whilst keeping the integrity of the source images.
See Fig.~\ref{fig:detailed_flowchart}(b) for the overall architecture of Step II reconstruction.

\section{Methods}
\label{sec:method}

\subsection{Training Dataset}

{From 10 patients who volunteered for scanning, 19 sets of in vivo data were acquired with 3T Philips Ingenia scanner.
Specifically, out of 10 patients, the scans were acquired as follows:
\begin{itemize}
    \item acceleration $\times$1 : 1 patient
    \item acceleration $\times$1, acceleration $\times$4 : 4 patients
    \item acceleration $\times$1, acceleration $\times$8 : 4 patients
    \item acceleration $\times$4, acceleration $\times$8 : 1 patient
\end{itemize}
In terms of number of slices used to train the neural network, a total of 18343 fully-acquired slices and 18356 under-sampled slices were used to train Step I neural network. 
For Step II training, 540 fully-acquired slices and 540 under-sampled slices were used to train the neural network.} 

All the scans were specified to the region covering the whole brain, with the field-of-view (FOV) of 180 x 180 mm. Specific parameters for the scans were defined as follows: repetition time (TR) = 23.00 ms, echo time (TE) = 3.45 ms, and FA = 18.00$^{\circ}$. Moreover, partial Fourier acquisition \citep{feinberg1986halving} was applied to the frequency encoding direction. Each set was acquired through MOTSA, consisting of 6 slabs, with  $k$-space matrix size 774x359x21 and 30 coils.
Once the k-space data are filled, the final reconstruction is obtained as 512x512x45 matrix size with zero padding and center cropping. {For training, 12 sets of patient data were used, while 7 sets of patient data were used for simulation study, and in vivo study.}

\begin{figure}[!hbt]
\center{ 
\includegraphics[width=6cm]{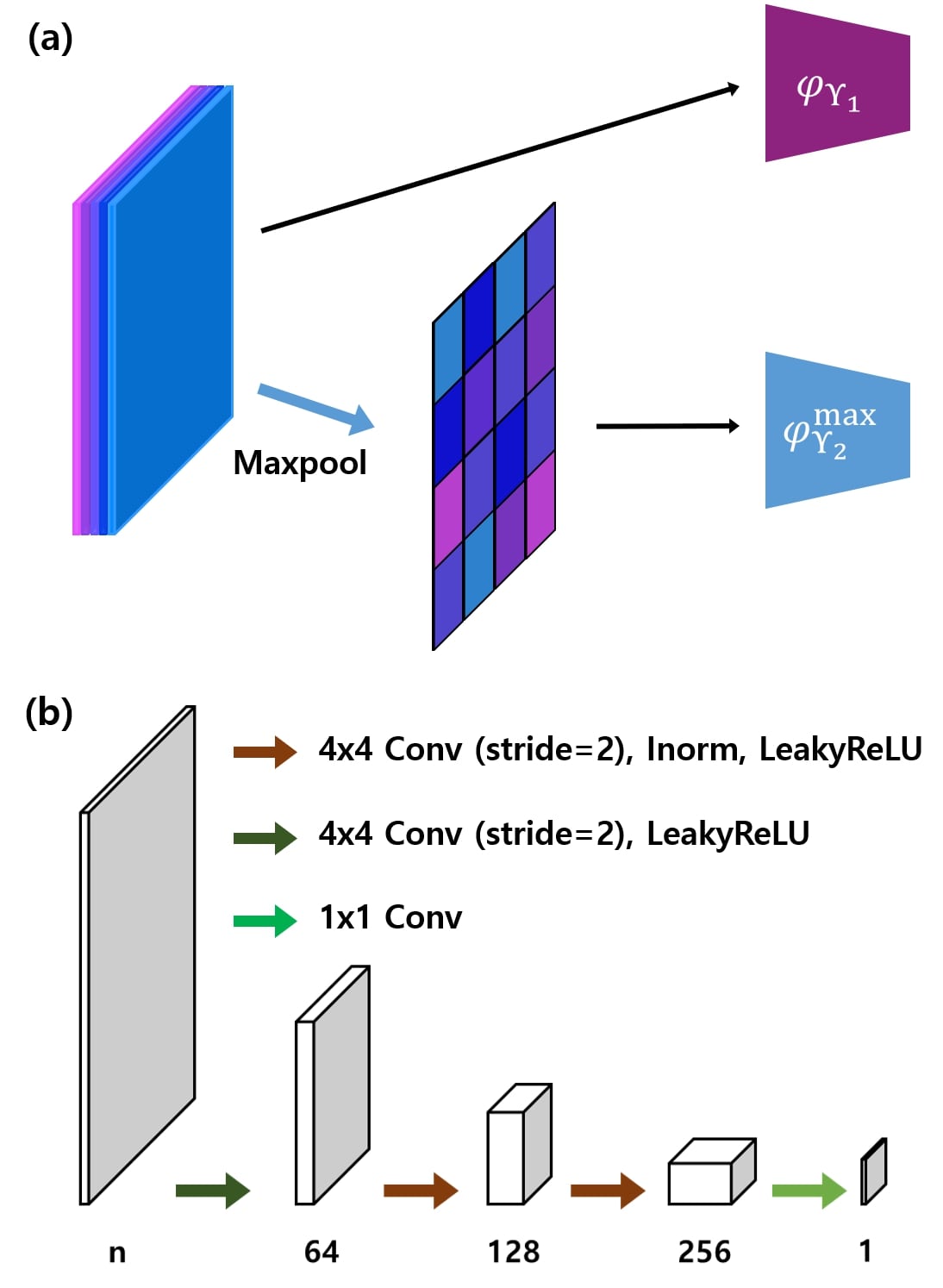}
}
\caption{Network architecture of the discriminators that were used in Step I and Step II. (a) Double-headed discriminator $\varphi_\Upsilon$ which consists of $\varphi_{\Upsilon_1}$ that takes in volume data, and $\varphi_{\Upsilon_2}^{max}$ that receives depthwise-maxpooled image as input. (b) Shared discriminator architecture that was adopted from patchGAN in \citep{zhu2017unpaired}. All the discriminators presented in our work share this specific architecture.}
\label{fig:discriminator_architecture}
\end{figure}

For the undersampling mask $\Omega$, the same masks that are used to accelerate MR scans from Philips Ingenia scanner were used without modification. Hence, two determined masks were used for x4 acceleration and x8 acceleration, respectively.

\subsection{Network Architecture}

\subsubsection{Generator Architecture}

For the single generator used in Step I training, we use modified U-Net architecture, which consists of four stages of convolutional layer, ReLU activation, and group normalization. Pooling and unpooling operations were constructed with 3x3 convolution with stride 2, and upscaling with bilinear interpolation, respectively. The number of convolutional filter channels was set to 64 at the first stage, and was increased two-fold at every stage, reaching 1024 at the last stage. To cope with the inherent nature of MR data which are complex, we stick to the conventional notion by stacking real and imaginary parts in the channel dimension. Thus, the dimension of the input channel was set to 60 (30 coils $\times$ 2 = 60). For detailed description, see Fig.~\ref{fig:generator_architecture}.

Moreover, we utilize nonlinear attention module which is known to enhance the expressivity of the network \citep{cha2020geometric}. For $G_\Theta$ in Step I, due to the large discrepancy between the input and the desired distribution, we utilize the same network architecture from Fig.~\ref{fig:generator_architecture}(a) as the attention module. Moreover, in $G_\Lambda$, a single 1$\times$1 convolution layer is utilized as the attention module.

In Step II training, we used two separate architectures for the mapping $G_\Lambda$ and $F_\Psi$. For the generator $G_\Lambda$, which is crucial, we adopt U-Net architecture as in Fig. \ref{fig:generator_architecture}(a) and set the initial filter length as 32 with 3 stages.
The network input is 3D volume composed of multiple slice images, which are stacked along the channel direction.
The network output is enhanced 3D volume with the same number of the slices. Slice depth of 7 was used, whose choice will be discussed further in the discussion section.
For the generator $F_\Psi$, we set the initial filter length to 8 with only 2 stages, restricting the expressivity of the network. Differentiating the two networks by the size resulted in more efficient and stable training compared to when we used two identical networks.
Again, the input and output of $F_\Psi$ is also three dimensional volume, where each slice is stacked along the channel direction.

\subsubsection{Discriminator Architecture}
  
The discriminators used in both steps were adopted from \citep{zhu2017unpaired}, and was modified to stabilize the training process. Specifically, patchGAN with 4x4 convolution kernel of three stages was used. Each stage consists of convolutional layer, instance normalization and leaky ReLU activation function as shown in Fig.\ref{fig:discriminator_architecture}(b). Moreover, spectral normalization \citep{miyato2018spectral} was applied to each layer for stability.

Discriminator architecture in Step II training is depicted in Fig. \ref{fig:discriminator_architecture}(a). For the given volume data, $\varphi_\Upsilon$ has two paths: $\varphi_{\Upsilon_1}$ which directly receives the volume as input, and $\varphi_{\Upsilon_2}^{\max}$ which collects single slice images acquired from maxpooling operation. $\varphi_{\Upsilon_1}$ and $\varphi_{\Upsilon_2}^{\max}$ can be seen as a double-headed discriminator $\varphi_\Upsilon$ as depicted in Fig. \ref{fig:discriminator_architecture} (b).

\subsection{Network Training}

For the first step of training, hyperparameters in \eqref{eq:phaseIloss} were set to $\gamma = 100$, $\alpha$ = 0.5, $\beta$ = 1. For optimization, RAdam optimizer \citep{liu2019variance}, \citep{kingma2014adam} was used with together with lookahead optimizer \citep{zhang2019lookahead}. Parameters for RAdam were set to $\beta_1$ = 0.5, $\beta_2$ = 0.999. Parameters for lookhead were set to $k$ = 5, $\alpha$ = 0.5. The initial learning rate was set to 0.0001 and was trained for 100 epochs. At 60 epoch of training, learning rate was decayed by a magnitude of 0.1. 

For Step II traning, hyperparameters in \eqref{eq:WGAN loss_1} were set to $\lambda_1$ = 5 and $\lambda_2$ = 3.
In the second step, Adam optimizer \citep{kingma2014adam} was used with parameters $\beta_1$ = 0.5 and $\beta_2$ = 0.999. 100 epochs of training was performed with consistent learning rate of 0.0001. 

For both steps of training, each input data was divided with the standard deviation of each input a priori. The proposed method was implemented in Python using PyTorch \citep{paszke2017automatic} with NVidia GeForce GTX 2080-Ti graphics processing unit. For the first step, the training took about three days, while the training of the second step took about 4 hours.

\section{Results}
\label{sec:results}

\subsection{Simulation study}

To verify the feasibility of our proposed method, and to prove that our method does not artificially generate pseudo-structures or pseudo-lesions that are not present in the ground truth, we first performed a reconstruction using  retrospectively subsampling. First, we retrospectively subsampled fully acquired k-space data with the given masks, each responsible for acceleration factor of $\times$4 and $\times$8.  
The undersampled k-space were subsequently reconstructed with the proposed method with trained $G_\Theta$ and $G_\Lambda$. {Here, Fig.~\ref{fig:main_results_in_vitro}(a) refers to the results achieved from two-step supervised learning. More specifically, the same neural network architectures used in the proposed method, $G_\Theta$ and $G_\Lambda$, were trained as a two-step process - in the coronal plane and the axial plane. Images in Fig.~\ref{fig:main_results_in_vitro}(b) column shows results with Step I of the proposed method, where only the reconstruction in the coronal plane was utilized. Fig.~\ref{fig:main_results_in_vitro}(c) contains results from our proposed method, where reconstruction took place both in coronal and axial directions.  

Moreover, when we compare results that were reconstructed with a two-step supervised learning process shown in Fig.~\ref{fig:main_results_in_vitro}(a), our proposed method shows superiority in preserving texture and realistic vessel structures. Results reconstructed with supervised learning tend to be blurry and the background near vessels contain more noise, whereas with the proposed method we can reconstruct high-resolution images with clear vessel structure.}
In fact, this kind of over-smoothing is quite often reported in supervised learning for image reconstruction.
On the other hand, unsupervised learning approaches without matched reference data should learn the distributions, so the oversmoothing
by fitting too much on the target data can be avoided.

\begin{figure*}
\center{ 
\includegraphics[width=18.0cm]{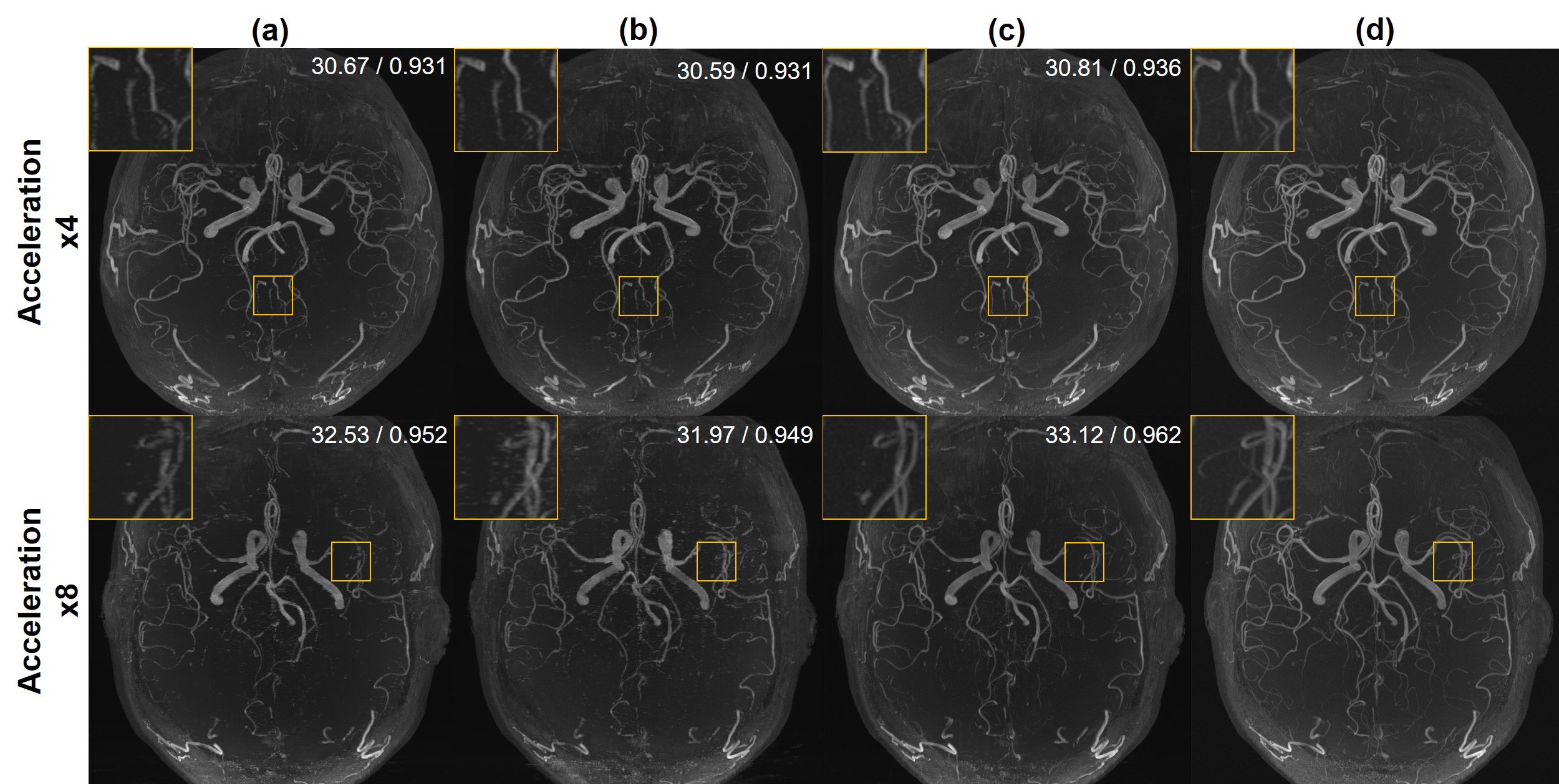}
}
\caption{MIP images from retrospective subsampling that were reconstructed using our method with single step and multi step. (a)
refers to the reconstruction using a  two step supervised learning approaches,  (b) refers to reconstructions where only Step I unsupervised learning was performed. (c) refers to reconstructed results after both Step I and II. (d) shows label images. The first row compares results from $\times$4 acceleration, while the second row compares results from $\times$8 acceleration. White numbers in the upper right part of the images indicate PSNR and SSIM, respectively.}
\label{fig:main_results_in_vitro}
\end{figure*}

\begin{figure*}
\center{ 
\includegraphics[width=15.0cm]{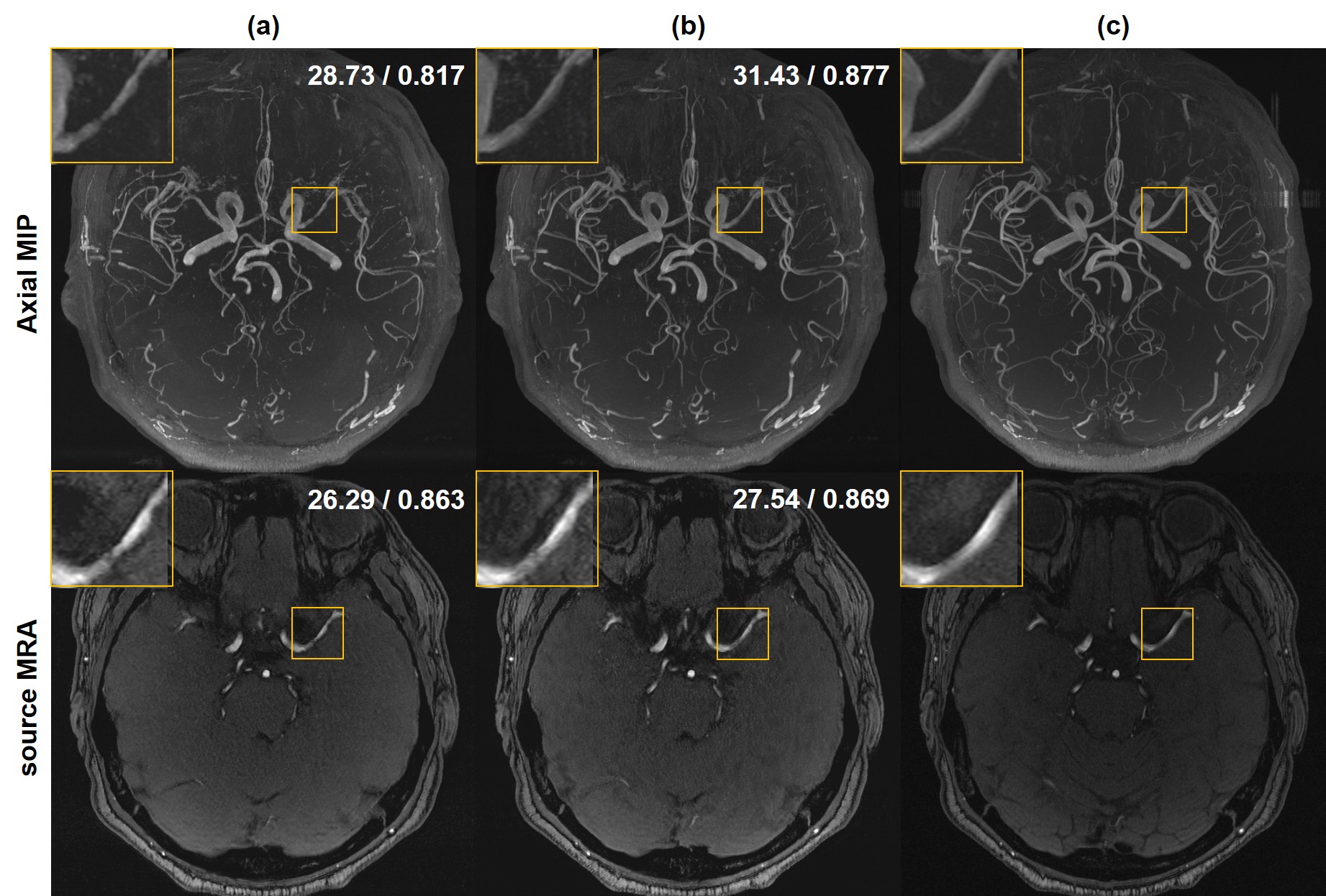}
}
\caption{Reconstruction results from retrospective subsampling  with and without $\varphi_{\Upsilon_2}^{\max}$. (a) indicates reconstructions that were performed in both steps, but without the projection discriminator. (b) shows results of our proposed method, with $\varphi_{\Upsilon_2}^{\max}$ present. (c) is the label data.  White numbers in the upper right part of the images indicate PSNR and SSIM, respectively. The yellow arrows in the figure indicate visible vessel structure with the proposed method, which was not visible with the reconstruction without the projection discriminator}
\label{fig:comparison_projection_D}
\end{figure*}

\begin{figure}
\hspace*{-0.3cm}
\centerline{ 
\includegraphics[width=9.3cm]{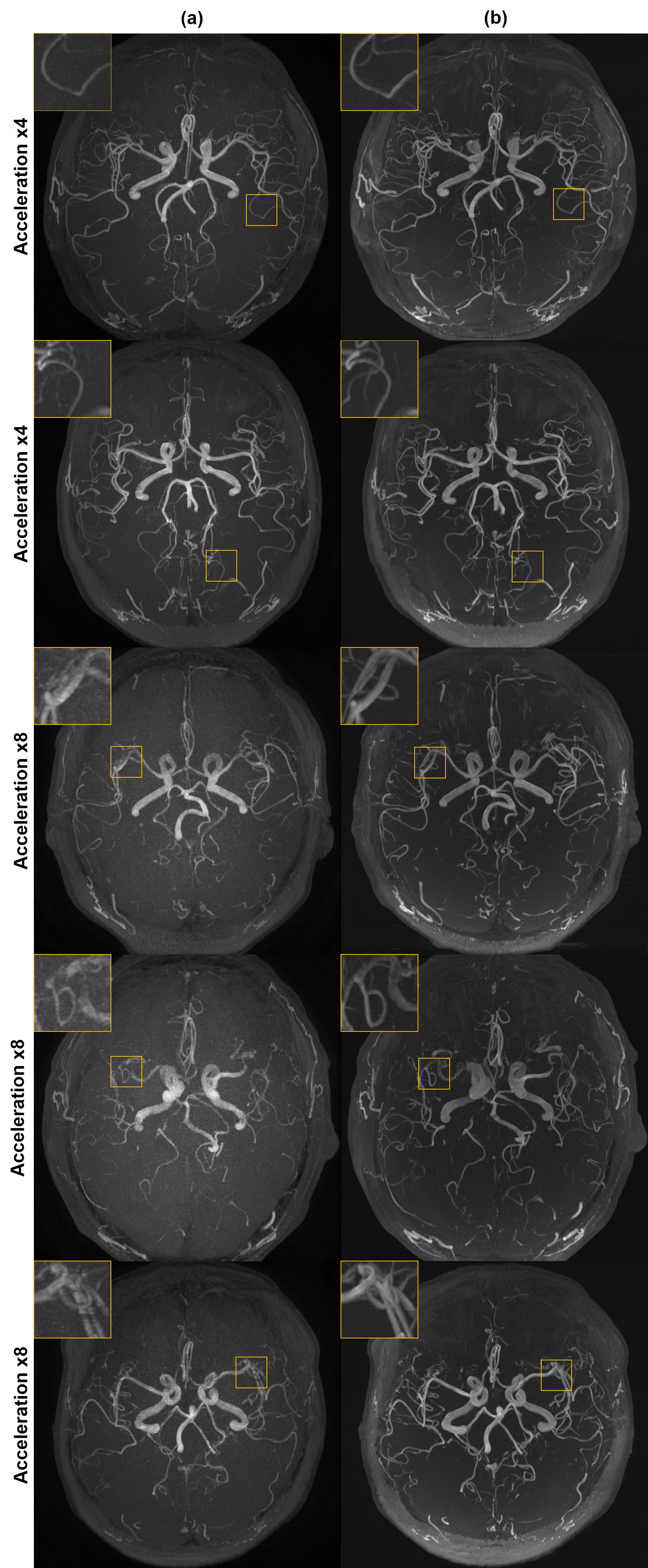}
}
\caption{In vivo reconstruction results viewed from the axial plane from acceleration factor of $\times$4 and $\times$8. MIP was also performed in the axial plane. (a) shows results from the compressed SENSE algorithm of the vendor. (b) shows results from our proposed method.}
\label{fig:main_results}
\end{figure}

 Furthermore, the MIP image reconstructed with single step training has numerous discontinuous vessels that are hard to distinguish from lesions, as shown in Fig.  \ref{fig:main_results_in_vitro}(b). 
In contrast, results from multiplanar reconstruction as shown in Fig. \ref{fig:main_results_in_vitro}(c), clearly have more visible vessels that are connected, and vascular discontinuity that was observed from uniplanar learning cannot be seen. Through two step learning, the vessel structures are much better preserved, not to mention the texture and detailed structures that closely resemble label images.
The advantage of the two step learning can best be seen in the MIP images.
From Fig. \ref{fig:main_results_in_vitro}, we also verify that artificial structures are not generated from our algorithm. Even though the acceleration factor in Fig. \ref{fig:main_results_in_vitro} is $\times$8, with the proposed method we are able to reconstruct images that faithfully resemble the structures shown in the label images.

To inspect the effect of $\varphi_{\Upsilon_2}^{\max}$, we also compare results without it. Although results without using $\varphi_{\Upsilon_2}^{\max}$ show improvement as opposed to results from uniplanar learning, they fall short behind our proposed method, especially in MIP image where we can still see pseudo-stenosis in the first row of Fig. \ref{fig:comparison_projection_D}. Visual clarity of vessels is also enhanced in source MRA images (second row, Fig. \ref{fig:comparison_projection_D}), where we see a thin vessel structure that is not apparent in the image shown in the third column.

\begin{figure*}[!ht] 	
\center{ 
\includegraphics[width=15.0cm]{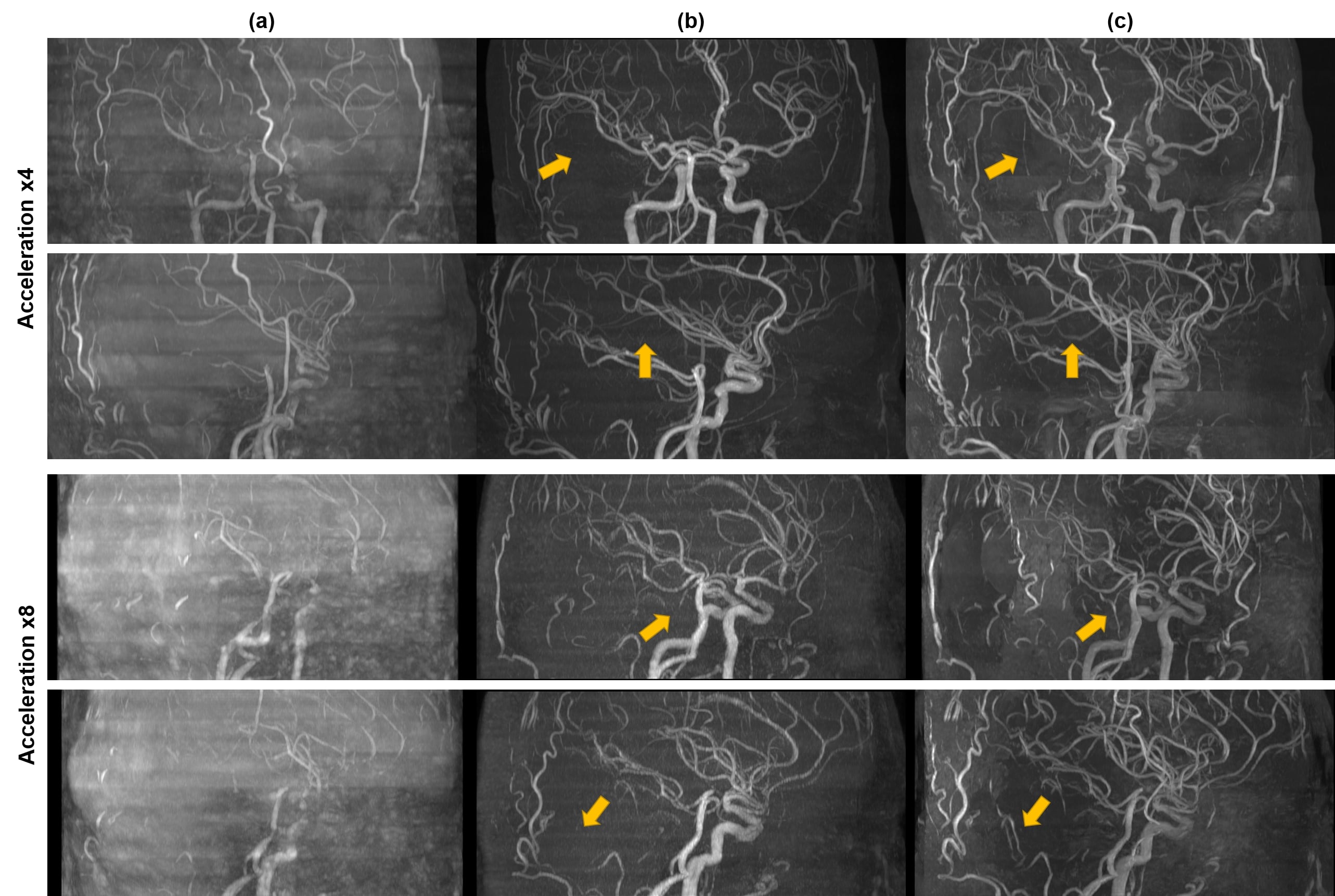}
}
\caption{In vivo MIP from multiple angles are presented in the figure. (a) refers to zero-filled reconstructions. (b) shows images that were directly acquired from the vendor, which are reconstructed using a CS algorithm (compressed SENSE). Images in (c) were reconstructed from raw k-space data using our proposed algorithm. The first two rows show reconstructed results from acceleration factor of $\times$4, while the latter two rows show results from acceleration factor of $\times$8.}
\label{fig:main_results_multiangle_MIP}
\end{figure*}

\subsection{In Vivo study}
To establish the improvements from the proposed method as opposed to conventional compressed sensing method that are used, we first performed an in vivo study where we compare reconstructions by the internal algorithm (Compressed SENSE \citep{geerts2018compressed}) that Philips 3T Ingenia scanner uses, to the reconstructions of ours.

As depicted in Fig. \ref{fig:main_results}, our method clearly demonstrates superiority with vessel contrast and continuity. Yellow arrows in Fig.~\ref{fig:main_results} show that our method is able to reconstruct what were not visible with the CS algorithm by the vendor. Moreover, robustness to noise, which impairs image quality, is also a clear advantage as opposed to the conventional algorithm. 

Furthermore, MIP from different angles as presented in Fig.  \ref{fig:main_results_multiangle_MIP} verifies that our proposed method clearly outperforms the algorithm of the vendor consistently in any projection directions. Namely, our method is able to reconstruct vessel structures that were not visible through the algorithm of the vendor, as marked with yellow arrows. Moreover, enhanced vascular continuity can be observed in the figure, which is important in clinical settings.

\subsection{Radiological evaluation}

The images were evaluated by a neuroradiologist (L.S.) with 10 years of experience in neuroimaging. The source images of TOF-MRA as well as the MIP images were assessed simultaneously during the evaluation.

On Fig.~\ref{fig:main_results_in_vitro}, when acceleration factor $\times$4 was applied, the MIPs of all four methods seem to be acceptable for relatively large blood vessels. However, when small vessels are evaluated (displayed as yellow boxes), the lumen of the vessels shows shaggy appearance on the image  with supervised learning approach (Fig.~\ref{fig:main_results_in_vitro}(a)) or with step I reconstruction only (Fig.~\ref{fig:main_results_in_vitro}(b)), as if in cases with severe atherosclerosis. In addition, very fine branches of vessels are missing on Fig.~\ref{fig:main_results_in_vitro}(a) and ~\ref{fig:main_results_in_vitro}(b), whereas it is faintly visualized on the proposed method (Fig.~\ref{fig:main_results_in_vitro}(c)), although slightly less conspicuous than on the label image (Fig.~\ref{fig:main_results_in_vitro}(d)).
When acceleration factor $\times$8 was applied, even the lumen of large vessels become irregular and discontinuous on Fig.~\ref{fig:main_results_in_vitro}(a) or ~\ref{fig:main_results_in_vitro}(b). Although fine branches of vessels are still missing on the proposed method (Fig.~\ref{fig:main_results_in_vitro}(c)), the lumen of the large vessels are well-visualized and acceptable for evaluation. The label image (Fig. ~\ref{fig:main_results_in_vitro}(d)) confirms that there are no pathology in the intracranial vessels.
On Fig.~\ref{fig:comparison_projection_D}, the image reconstructed with the projection discriminator (Fig.~\ref{fig:comparison_projection_D}(b)) depicts the contour of the vessel more clearly than the one without the projection discriminator (Fig.~\ref{fig:comparison_projection_D}(a)). Although, a focal mild pseud-stenosis is noted (arrowheads), the degree is much milder than on Fig.~\ref{fig:comparison_projection_D}(a) and can be easily dismissed considering the MRA source image. No stenosis is noted on the label image (Fig.~\ref{fig:comparison_projection_D}(c)).
When the images reconstructed by the proposed algorithm were compared to the ones by the vendor algorithm while maintaining the acceleration factor (Fig.~\ref{fig:main_results}), the proposed algorithm (Fig.~\ref{fig:main_results}(b)) was clearly superior to the vendor (Fig.~\ref{fig:main_results}(a)), in terms of signal-to-noise and conspicuity of the vessel contour. In particular, the vendor images with acceleration factor of $\times$8 seem to be unacceptable for clinical practice in its present form, where multiple pseudo-stenoses are found even for relatively large vessels.
Interestingly, the use of two-step reconstruction process, in the coronal plane followed by axial plane, appears to be helpful for reducing the so-called Venetian blind artifact, which is resultant to the MOTSA technique (Fig.~\ref{fig:main_results_multiangle_MIP}). The differences of signal intensity of adjacent slabs have been concomitantly adjusted to the image reconstruction.

\section{Discussion}
\label{sec:discussion}

\subsection{Optimal choice of slice depth}

With Step II training where we take partial stacks of volume data for training, we can flexibly choose the slice depth as a hyperparameter. To choose the optimal depth especially for constructing MIP, we experimented with slice depths $1 \sim 9$. The results in Table~\ref{tbl:ablation_slicedepth} using 3D data shows consistent improvement over using single slice data. Two reasons mainly account for this. First, the projection discriminator can no longer be utilized when we use depth 1 training. Since the main workhorse for improving the quality of MIP was the projection discriminator, the lack of this discriminator leads to poorer performance. Second, while reconstruction with depth 1 does improve the visual quality of the images by making the texture more realistic, it cannot enhance the visibility of vessels since information from adjacent slices are not accessible. Also, when we compare the metrics by varying the slice depth other than 1, we get the most effective result when we set the slice depth to 7. Table~\ref{tbl:ablation_slicedepth} indicates the choice of 7 as optimal slice depth is sound.

\begin{table}[!thb]
\caption{Comparison of quantitative metrics between reconstruction results of source MIP and MRA images with different slice depths. The number in each column indicates slice depths in the training of Step II.}
\centering
	\resizebox{0.45\textwidth}{!}{
\begin{tabular}{c|c|ccccc}
\hline
\multirow{2}{*}{\begin{tabular}[c]{@{}c@{}}Image\\ Type\end{tabular}} & \multirow{2}{*}{Metric} & \multicolumn{5}{c}{Number of slices} \\ \cline{3-7} 
                     &      & 1      & 3               & 5              & 7               & 9      \\ \hline\hline
\multirow{2}{*}{MIP} & PSNR & 29.27  & 31.02           & 29.92          & \textbf{31.43}  & 30.61  \\  
                     & SSIM & 0.8379 & \textbf{0.8774} & 0.8524         & 0.8771          & 0.8512 \\ \hline
\multirow{2}{*}{MRA} & PSNR & 29.23  & 29.42           & \textbf{31.10} & 30.00           & 29.11  \\ 
                     & SSIM & 0.7831 & 0.7723          & 0.7779         & \textbf{0.7958} & 0.7492 \\ \hline 
\end{tabular}
}
\label{tbl:ablation_slicedepth}
\end{table}

\subsection{Multiplanar learning vs. Volumetric learning}

There may be different ways to tackle 3D MR acceleration. Volumetric learning by utilizing full volume data could be a possible choice. Nonetheless, we propound that multiplanar learning is a better match for 3D TOF MRA reconstruction.

For one thing, GPU memory is limited, and loading the full 3D data into the GPU easily exceeds the constraint. Note that especially for multi-coil data where we have 4 dimensions in total: read-out, phase-encoding 1, phase-encoding 2, and coil, we have very limited size of data that are loadable to the GPU at once. In addition, with MOTSA scans where we have multiple slabs for each patient data, the choice of a {single volume} becomes ambiguous.

That being said, the proposed method that divides the training stage into two parts is a reasonable choice. Our method seamlessly incorporates all 4 dimensional information without technical overhead.

\section{Conclusion}
\label{sec:conclusion}

To devise a method that is well suited for the reconstruction of accelerated 3D TOF MRA, in this paper we suggested a multiplanar unpaired learning approach. In particular, MR-physics driven cycleGAN approach is exploited in the coronal plane as the first step of training process. Progressively, a novel cycleGAN approach in 3D with a newly-proposed projection discriminator is applied in the axial plane. The first step is meaningful in that we provide a method that is able to incorporate accelerated data into the training scheme, and by exploiting MR-physics we devise a method that is much stabler than the conventional cycleGAN approach. The second phase enhances the quality of images, especially images of MIP, which is more clinically meaningful. Our method can provide high quality reconstructions at very high acceleration factors which were not possible with conventional vendor CS methods. Thus, we suggest a new direction of study for the acceleration of 3D MRA by exploiting information from multiple axes without the need for large amount of paired data.

In this work, we used 7 patient data scans to validate the research. However, the number of scans used to test the proposed method is limited, and the method was not tested using scans in which lesions are apparent. Hence, to prove its clinical utility, a more comprehensive research in the clinical perspective using more data with enhanced diversity could be a further direction of research.

\section*{Acknowledgments}
This work was supported in part by Korea Advanced Institute of Science and Technology, Grant number N11200110, and in part by a grant from the National Research Foundation of Korea (NRF-2018R1C1B6007917 and NRF-2020R1A2B5B03001980) and by grants from the SNUBH Research Fund (No. 09-2019-006 and 16-2020-002).

\section*{Appendix}

The proof is a direct extension of the proof in \citep{sim2019OT}, but we include the following for self-containment.
%

Using the transportation cost $ c(\xb,\yb;\Theta)$ given by Eqs. \eqref{eq:cmain},
the primal optimal transport problem becomes
\begin{align}
\Kd(G,F):=&\min_{\pi\in \Pi(\mu,\nu)} \int_{\Xc\times \Yc} c(\xb,\yb;\Theta)d\pi(\xb, \yb) \\
=& \int_{\Xc\times \Yc}  c_{XY}(\xb,\yb) d\pi^*(\xb, \yb)  + \ell_x(G,F)+ \ell_y(G,F)
\end{align}
where $\pi^*$  denote the optimal joint measure, $\mu,\nu$ are the marginal distribution, and 
\begin{align*}
 c_{XY}(\xb,\yb) = &  \|\Ac(\xb)-\Ac(G(\yb))\|+ \|\Ac(F(\xb))-\Ac(\yb)\|  
\end{align*}
and 
\begin{align*}
 \ell_x(G,F) &= \min_{\pi\in \Pi(\mu,\nu)}\int_{\Xc\times \Yc} b_x\left(\xb;G,F\right)  d\pi(\xb, \yb) \\\
 &=\int_\Xc b_x\left(\xb;G,F\right)  d\mu(\xb)
\end{align*}
after integrating out with respect to $\yb$; similarly, we have
\begin{align*}
 \ell_y(G,F)
 &=\int_\Yc b_y\left(\yb;G,F\right)  d\nu(\yb)
\end{align*}
Now,  according to the Kantorovich dual formulation \citep{villani2008optimal}, we have
\begin{align*}
\Kd_{XY}:= & \int c_{XY}(\xb,\yb) d\pi^*(\xb, \yb) \\
= & \frac{1}{2}\left\{ \max_{\zeta}\int_\Xc\zeta(\xb)d\mu(\xb)+ \int_\Yc \zeta^c(\yb) d\nu(\yb)\right. \\
&+ \left.  \max_{\eta}\int_\Xc\eta^c(\xb)d\mu(\xb)+ \int_\Yc \eta(\yb) d\nu(\yb) \right\}
\end{align*}
where the so-called c-transforms $ \zeta^c(\yb)$ and $\eta^c(\xb)$ are defined by \citep{villani2008optimal}
\begin{align*}
 \zeta^c(\yb)
&= \inf_\xb \{  \|\Ac(\xb)-\Ac(G(\yb))\|+ \|\Ac(F(\xb))-\Ac(\yb)\|  -\zeta(\xb) \}\\
 \eta^c(\xb)
&= \inf_\yb \{  \|\Ac(\xb)-\Ac(G(\yb))\|+ \|\Ac(F(\xb))-\Ac(\yb)\|  -\eta(\yb) \}\
\end{align*}
Now, instead of finding the $\inf_\xb$, we choose $\xb=G(\yb)$.
Similarly, instead of finding the $\inf_\yb$, we choose $\yb=F \xb$. 
This leads to an upper bound: 
\begin{eqnarray*}
\Kd_{XY}&\leq & 
\frac{1}{2}\left(\ell_{cycle}(G,F)+ \ell_{Disc}(G,F; \zeta, \eta)\right)
\end{eqnarray*}
where
\begin{align}
\ell_{cycle}(G,F) =&  \int_\Xc  \|\Ac(\xb)-\Ac(G(F(\xb)))\| d\mu(\xb)  \notag\\
& +\int_\Yc  \|\Ac(F(G(\yb)))-\Ac(\yb)\|d\nu(\yb) \label{eq:cycle}\\
 \ell_{Disc}(G,F; \zeta, \eta) = & \max_{\zeta}\int_\Xc \zeta(\xb)  d\mu(\xb) - \int_\Yc \zeta(G(\yb))d\nu(\yb)  \notag\\
 &+ \max_{\eta}\int_{\Yc} \eta(\yb)  d\nu(\yb) - \int_\Xc \eta(F \xb)  d\mu(\xb)
\end{align}
Now, if we define 
\begin{align}\label{eq:equ}
\zeta(\xb):= \varphi(\Ac(\xb)),&\quad \eta(\yb):= \psi(\Ac(\yb))
\end{align}
for some 1-Lipschitz function $\varphi$ and $\psi$,
we have
\begin{align*}
 \zeta(\xb)-\zeta(G(\yb)) &= \varphi(\Ac(\xb))-\varphi(\Ac(G(\yb)))\\
 &\leq  \|\Ac(\xb)-\Ac(G(\yb))\| \\
 &\leq \|\Ac(\xb)-\Ac(G(\yb))\|+ \|\Ac(F(\xb))-\Ac(\yb)\| \\
 \eta(\yb)- \eta(F \xb)&= \psi(\Ac(\yb))-\psi(\Ac(F\xb))\\
&\leq \|\Ac(\xb)-\Ac(G(\yb))\|+ \|\Ac(F(\xb))-\Ac(\yb)\| \\
\end{align*}
This leads to the following lower-bound
\begin{eqnarray*}
\Kd_{XY}
&\geq& \frac{1}{2}\ell_{Disc}(G,F;\zeta,\eta)
\end{eqnarray*}
If we replace the discriminator using  \eqref{eq:equ},
we have
\begin{align*}
& \ell_{Disc}(G,F;\zeta,\eta) = \ell_{Disc}(G,F;\varphi,\psi) \\
:=&\max_{\varphi}\int_\Xc \varphi(\Ac(\xb))  d\mu(\xb) - \int_\Yc \varphi(\Ac(G(\yb)))d\nu(\yb) \notag \\
 & + \max_{\psi}\int_{\Yc} \psi(\Ac(\yb))  d\nu(\yb) - \int_\Xc \psi(\Ac(F(\xb)))  d\mu(\xb) \notag
\end{align*}
The rest of the proof is exactly the same as in \cite{sim2019OT}.
This concludes the proof.

\bibliographystyle{model2-names.bst}\biboptions{authoryear}

\end{document}